\begin{document}

\title{Multi-agent learning using Fictitious Play and Extended Kalman Filter}
\author[1]{Michalis Smyrnakis\thanks{michalis.smyrnakis@manchester.ac.uk}}
\affil[1]{Complex Systems and Statistical Physics Group, School of Physics and Astronomy, University of Manchester, UK}
\maketitle
\begin{abstract}
Decentralised optimisation  tasks are important components of multi-agent systems. These tasks can be interpreted as n-player potential games: therefore game-theoretic learning algorithms can be used to solve decentralised optimisation tasks. Fictitious play is the canonical example of  these algorithms. Nevertheless fictitious play implicitly assumes that players have stationary strategies. We present a novel variant of fictitious play where players predict their opponents' strategies using Extended Kalman filters and use their predictions to update their strategies.

We show that in 2 by 2 games with at least one pure Nash equilibrium and in potential games where players have two available actions, the proposed algorithm converges to the pure Nash equilibrium. The performance of the proposed algorithm was empirically tested,  in two strategic form games and an ad-hoc sensor network surveillance problem. The proposed algorithm performs better than the classic fictitious play algorithm in these games and therefore improves the performance of game-theoretical learning in decentralised optimisation. \\
{\bf{Keywords:}} Multi-agent learning, game theory, fictitious play, decentralised optimisation, learning in games, Extended Kalman filter.
\end{abstract}

\newtheorem{theor}{Theorem}   
\newtheorem{prop}{Proposition}
\section{Introduction}
Recent advance in technology render decentralised optimisation a crucial component of many applications of multi agent systems and decentralised control. Sensor networks \citep{sn}, traffic control \citep{tf} and  scheduling problems \citep{sc} are some of the tasks where decentralised optimisation can be used. These tasks share common characteristics such as large scale, high computational complexity and communication constraints that make a centralised solution intractable. It is well known that many decentralised optimisation tasks can be cast as potential games \citep{wlu,autonomous}, and the search of an optimal solution can be seen as the task of finding Nash equilibria in a game. Thus it is feasible to use iterative learning algorithms from game-theoretic literature to solve decentralised optimisation problems.

A game theoretic learning algorithm with proof of convergence in certain kinds of games is fictitious play \citep{learning_in_games,fp4}. It is a learning process where players choose an action that maximises their expected rewards according to the beliefs they maintain about their opponents' strategies.The players update their beliefs about their opponents' strategies after
observing their actions. Even though fictitious play converges to Nash equilibrium, this convergence can be very slow. This is because it implicitly assumes that other players use a fixed strategy in the whole game. \citet{cj} addressed this problem by representing the fictitious play process as a state space model and by using particle filters to predict opponents' strategies. The drawback of this approach is the computational cost of the particle filters that render difficult the application of this method in real time applications.

The alternative that we propose in this article is to use instead of particle filters, extended Kalman filters (EKF) to predict opponents' strategies. Therefore the proposed algorithm has smaller computational cost than the particle filter variant of fictitious play algorithm that proposed by \citet{cj}. We show that the EKF fictitious play algorithm converges to a pure Nash equilibrium, in 2 by 2  games with at least one pure Nash equilibrium and in potential games where players have two available actions. We also empirically observe, in a range of games, that the proposed algorithm needs less iterations than the classic fictitious play to converge to a solution. Moreover in our simulations, the  proposed algorithm converged to a solution with higher reward than the classic fictitious play algorithm. 

The remainder of this paper is organised as follows. We start with a brief description of game theory, fictitious play and extended Kalman filters. Section \ref{new_algorithm} introduces the proposed algorithm that combines fictitious play and extended Kalman filters. The convergence results we obtained are presented in Section \ref{theory}. In Section \ref{parameters} we propose some indicative values for the EKF algorithm parameters. Section \ref{simulation} presents the simulation results of EKF fictitious play in  a 2$\times$2 coordination game, a three player climbing hill game and an ad-hoc sensor network surveillance problem. In the final section we present our conclusions. 

\section{Background}
\label{background}
In this section we introduce some definition from game theory that we will use in the rest of this article and the relation between potential games and decentralised optimisation.  We also briefly present the classic fictitious play algorithm and the extended Kalman filter algorithm.
\subsection{Game theory definitions}
We consider a game $\Gamma$ with $\mathbb{I}$ players, where each player $i$,$i=1,2,\ldots,\mathbb{I}$, choose his action, $s^{i}$, from a finite discrete set $S^{i}$. We then can define the joint action that is played in a game as the set product $S=\times_{i=1}^{i=\mathbb{I}}S^{i}$. Each Player $i$ receive a reward, $u^{i}$, after choosing an action . The reward is a map from the joint action space to the real numbers,  $u^{i}:S \rightarrow R$. We will often write
$s=(s^{i},s^{-i})$, where $s^{i}$ is the action of Player $i$ and $s^{-i}$ is the joint action of Player $i$'s opponents. When players select their actions using a probability distribution they use mixed strategies. The mixed strategy of a player $i$, $\sigma^{i}$, is an element of the set $\Delta^{i}$, where $\Delta^{i}$ is the set of all the probability distributions over the action space $S^{i}$. The joint mixed strategy, $\sigma$, is then an element of $\Delta=\times_{i=1}^{i=\mathbb{I}}\Delta^{i}$. Analogously to the joint actions we will write  $\sigma=(\sigma^{i},\sigma^{-i})$. In the special case where the players choose an action with probabiity one we will say that players choose their actions using pure strategies. The expected utility a player $i$ will gain if he chooses a strategy $\sigma^{i}$ (resp.\ $s^{i}$), when his opponents choose the joint strategy $\sigma^{-i}$ is $u^{i}(\sigma^{i},\sigma^{-i})$ (resp.\ $u^{i}(s^{i},\sigma^{-i})$). 

A common decision rule in game theory is best response (BR). The best response is defined as the action that maximizes players' expected utility given their opponents' strategies. Thus for a specific opponents' strategy  $\sigma^{-i}$ we evaluate the best response as:
\begin{equation}
BR^{i}(\sigma^{-i})= \mathop{\rm argmax}_{s^{i} \in S} \quad
u^{i}(s^{i},\sigma^{-i})
\label{eq:br}
\end{equation} 

\citet{nash} showed that every game has at least one equilibrium, which is a fixed point of
the best response correspondence, $\sigma^{i} \in BR(\sigma^{-i})$. Thus when a joint mixed strategy $\hat{\sigma}$ is a Nash equilibrium then:
\begin{equation}
u^{i}(\hat{\sigma}^{i},\hat{\sigma}^{-i})\geq u^{i}(s^{i},\hat{\sigma}^{-i}) \qquad \textrm{for all }  s^{i} \in S^{i}
\label{eq:nashutil}
\end{equation}

Equation \ref{eq:nashutil} implies that if a strategy
$\hat{\sigma}$ is a Nash equilibrium then it is not possible for a player to
increase his utility by unilaterally changing his strategy. When all the players in a game select their actions using pure strategies then the
equilibrium actions are referred as pure strategy Nash equilibria. A pure equilibrium is strict if
each player has a unique best response to his opponents actions.
\subsection{Decentralised optimisation tasks as potential games}
A class of games that are of particular interest in multi agent systems and decentralised optimisation tasks are potential games, because of their utility structure. In particular in order to be able to solve an optimisation task decentrally the local functions should have similar characteristics with the global function that we want to optimise. This suggests that an action which improves or reduces the utility of an individual should respectively increase or reduce the global utility. Potential games have this property, since the potential function (global function) depict the changes in the players' payoffs (local functions) when they unilaterally change their
actions. More formally we can write 
\begin{equation*}
u^{i}(s^{i},s^{-i})-u^{i}(\tilde{s^{i}},s^{-i})=
\phi(s^{i},s^{-i})-\phi(\tilde{s^{i}},s^{-i})
\end{equation*}
where $\phi$ is a potential function and the above equality stands
for every player $i$, for every action $s^{-i}\in S^{-i}$, and for
every pair of actions $s^{i}$, $\tilde{s^{i}} \in S^{i}$, where
$S^{i}$ and $S^{-i}$ represent the set of all available actions
for Player $i$ and his opponents respectively.  Moreover potential games has at least one pure Nash equilibrium, hence there is at least one joint action $s$ where no player can increase their reward, therefore the potential function, through a unilateral deviation.

It is feasible to choose an appropriate form of the agents' utility function in order for the global utility to act as a potential of the system. Wonderful life utility is a utility function that introduced by \citet{wlu} and applied by \citet{autonomous} to formulate distributed optimisation tasks  as potential games. Player $i$'s utility, when wonderful life utility is used, can be defined as the difference between the global utility $u_{g}$ and the utility of the system when a reference action is used as player's $i$ action. More formally when player $i$ chooses an action $s^{i}$ we write
\begin{equation*}
u^{i}(s^{i})=u_{g}(s^{i},s^{-i})-u_{g}(s^{i}_{0},s^{-i})
\label{eq:wlu}
\end{equation*}
where $s^{i}_{0}$ denotes the reference action of player $i$. Hence the decentralised optimisation
problem can be cast as a potential game and any algorithm that is proved to
converge to a Nash equilibrium of a potential game, which is a local or the global optimum of the optimisation problem, will converge to a joint action from which no player
can increase the global reward through unilateral deviation.

\subsection{Fictitious play}
Fictitious play \citep{brown_fict}, is a widely used learning technique in game theory. In fictitious play each player chooses
his action according to the best response to his beliefs about his opponents' joint mixed strategy $\sigma^{-i}$.

Initially each player has some prior beliefs about the strategy
that each of his opponents uses to choose an action based on a weight function $\kappa_{t}$. The
players, after each iteration, update the weight function and therefore their beliefs about their opponents' strategies and
play again the best response according to their beliefs. More
formally in the beginning of a game Player $i$ maintains some arbitrary non-negative initial weight functions
$\kappa_{0}^{j}$, $\forall j\in[1, \mathbb{I}] \backslash \{i\}$, that are updated using the formula:
\begin{equation*}
\kappa_{t}^{j}(s^{j}) = \kappa_{t-1}^{j}(s^{j})+\mathfrak{I}_{s^j_t=s^j}
\label{eq:kappa}
\end{equation*}
for each $j$, where
$\mathfrak{I}_{s^j_t=s^j}=\left\{\begin{array}{cl}1&\mbox{if
$s^j_t=s^j$}\\0&\mbox{otherwise.}\end{array}\right.$. \\The mixed
strategy of opponent $j$ is estimated from the following formula:

\begin{equation}
\sigma_{t}^{j}(s^{j})=\frac{\kappa^j_{t}(s^j)}{\sum_{s' \in S^{j}}\kappa^j_{t}(s')}.
\label{eq:fp1p}
\end{equation}


Player $i$ based on his beliefs about his opponents' strategies,  chooses the action which maximises his expected payoffs. When player $i$ uses equation (\ref{eq:fp1p}) to update the beliefs about his opponents' strategies he treats the environment of the game as stationary and implicitly assumes that the actions of the
players are sampled from a fixed probability distribution. Therefore
the recent observations have the same weight as the initial ones.
This approach leads to poor adaptation when the other players
choose to change their strategies.

\subsection{Fictitious play as a state space model}
We follow \citet{cj} and we will represent fictitious play process as a state-space model. 
According to this state space model each player has a propensity $Q_{t}^{i}(s^{i})$ to play each of his available actions $s^{i} \in S^{i}$, and then he  forms his strategy based on these propensities. Finally he chooses his actions based on his strategy and the best response decision rule. Because players have no information about the evolution of their opponents' propensities, and under the assumption that the changes in propensities are small from one iteration of the game to another, we model propensities using a Gaussian autoregressive prior on all propensities. 
We set $Q_{0}\sim N(0,I)$ and recursively update the value of  $Q_{t}$ according to the value of $Q_{t-1}$ as follows:
\begin{equation*}
Q(s_{t})=Q(s_{t-1})+\eta_{t}
\label{eq:propen}
\end{equation*}
where $\eta_{t}\sim N(0,\chi^{2}I)$. 
The action of a player then is related to his propensity by the following sigmoid equation for every $s^{i} \in S^{i}$
\begin{equation*}
 s^{i}=\frac{e^{(Q^{i}(s^{i})/\tau)}}{\sum_{\tilde{s} \in S^{i}}e^{(Q_{t}(\tilde{s})/\tau)}}.
\end{equation*}
Therefore players will assume that  at every iteration $t$ their opponents have a different strategy $\sigma_{t}$. 

\subsection{Kalman filters and Extended Kalman filters}
Our objective is to estimate player $i$'s opponent propensity and thus to estimate the marginal probability $p(Q_{t},s_{1:t})$. This objective can be represented as a Hidden Markov Model (HMM). HMMs are used to predict the value of an unobserved variable $x_{t}$, the hidden state,  using the observations of another variable $z_{1:t}$.  There are two main assumptions in the HMM representation. The former one is that the probability of being at any state $x_{t}$ at time $t$ depends only at the state of time $t-1$, $x_{t-1}$. The latter one is that an observation at time $t$ depends only on the current state $x_{t}$. One of the most common methods to estimate $p(x_{1:t},z_{1:t})$ is Kalman filters and its variations. Kalman filter \citep{kalman} is based on two assumptions, the first is that the state variable is Gaussian. The second is that the observations are the result of a linear combination of the state variable. Hence Kalman filters can be used in cases which are represented as the following state space 
model:
\begin{equation*}
\label{eq:kf_state}
\begin{split}
x_t= &  Ax_{t-1}+\xi_{t-1}  \textrm{   hidden layer} \\
y_{t}= & Bx_{t}+\zeta_{t} \textrm{   	observations} 
\end{split}
\end{equation*}
where $\xi_t$ and $\zeta_t$ follow a zero mean normal distribution with covariance matrices $\Xi=q_{t}I$ and $Z=r_{t}I$ respectively, and $A$, $B$ are linear transformation matrices. When the distribution of the state variable $x_t$ is Gaussian then $p(x_{t}|y_{1:t})$ is also a Gaussian distribution, since $y_{t}$ is a linear combination of $x_{t}$. Therefore it is enough to estimate its mean and variance to fully characterise $p(x_{t}|y_{1:t})$. 

Nevertheless in the state space model we want to implement, the relation between Player $i$'s opponent propensity and his actions is not linear. Thus we should use a more general form of state space model such as:  
\begin{equation}
\label{eq:ekf_state}
\begin{split}
x_t&=f(x_{t-1})+\xi_{t} \\
y_{t}&=h(x_{t})+ \zeta_{t} 
\end{split}
\end{equation}
where $\xi_{t}$ and $\zeta_{t}$ are the hidden and observation state noise respectively, with zero mean and covariance matrices $\Xi=q_{t}I$ and $Z=r_{t}I$ respectively. The distribution of $p(x_{t}|y_{1:t})$ is not a Gaussian distribution because $f(\cdot)$ and  $h(\cdot)$ are non-linear functions. A simple method to overcome this shortcoming is to use a first order Taylor expansion to approximate the distributions of the sate space model in (\ref{eq:ekf_state}). In particular we let $x_{t}=m_{t-1}+\epsilon$, where $m_{t}$ denotes the mean of $x_{t}$ and $\epsilon \sim N(0,P)$. We can rewrite (\ref{eq:ekf_state}) as: 
\begin{equation}
\label{eq:ekf_taylor}
\begin{split}
x_t&=f(m_{t-1}+\epsilon)+w_{t-1}=f(m_{t-1})+F_{x}(m_{t-1})\epsilon +\xi_{t-1}\\
y_{t}&=h(m_{t}+\epsilon)+\zeta_{t}=h(m_{t})+H_{x}(m_{t})\epsilon+\zeta_{t}
\end{split}
\end{equation}
where $F_{x}(m_{t-1})$ and $H_{x}(m_{t})$ is the Jacobian matrix of $f$ and $h$ evaluated at $m_{t-1}$ and $m_{t}$, respectively. If we use the transformations in (\ref{eq:ekf_taylor}) then $p(x_{t}|y_{1:t})$ is a Gaussian distribution.

Since $p(x_{t}|y_{1:t})$ is a Gaussian distribution to fully characterise it we need to evaluate its mean and its variance. The EKF process \citep{ekf1,ekf2} estimates this mean and variance in two steps the prediction and the update step. In the prediction step at any iteration $t$ the distribution of the state variable is estimated based on all the observations until time $t-1$, $p(x_{t}|y_{1:t-1})$. The distribution of $p(x_{t}|y_{1:t-1})$ is Gaussian and we will denote its mean and variance  as  $m_{t}^{-}$ and $P_{t}^{-}$ respectively. During the update step the estimation of the prediction step is corrected in the light of the new observation at time $t$, so we estimate $p(x_{t}|y_{1:t})$. This is also a Gaussian distribution and we will denote its mean and variance  as  $m_{t}$ and $P_{t}$ respectively.

The prediction and the update steps of the EKF process \citep{ekf1,ekf2} to estimate the mean and the variance of $p(x_{t}|y_{1:t-1})$ and $p(x_{t}|y_{1:t})$ respectively are the following: \\
\textbf{Prediction Step}
\begin{align}
\label{eq:pdstep}
m_{t}^{-}= &f(m_{t-1}) \nonumber \\
P_{t}^{-}=&F(m_{t-1})P_{t-1}F(m_{t-1})+\Xi_{t-1} \nonumber
\end{align}
where the $j,j'$ element of $F(m_{t})$ is defined as
\begin{equation*}
[F(m_{t}^{-})]_{j,j'}=\frac{\partial f(x_{j},r)}{\partial x_{j'}}\arrowvert_{x=m_{t}^{-}, q=0}	
\end{equation*}
\textbf{Update Step}
\begin{eqnarray}
\label{eq:updatestep}
v_t&=&z_{t}-h(m_{t}^{-}) \nonumber \\
S_{t}&=&H(m_{t}^{-})P_{t}^{-}H^{T}(m_{t}^{-})+Z \nonumber \\
K_{t}&=&P_{t}^{-}H^{T}(m_{t}^{-})S_{t}^{-1} \nonumber \\
m_{t}&=&m_{t}^{-}+K_{t}v_{t}\nonumber \\
P_{t}&=&P_{t}^{-}-K_{t}S_{t}K_{t}^{T} \nonumber
\end{eqnarray}
where $z_{t}$ is the observation vector (with 1 in the entry of the observed action and 0 everywhere else) and the $j,j'$ element of $H(m_{t})$ is defined as:  
\begin{equation*}
[H(m_{t}^{-})]_{j,j'}=\frac{\partial h(x_{j},r)}{\partial x_{j'}}\arrowvert_{x=m_{t}^{-}, r=0}	
\end{equation*}

\section{Fictitious play and EKF}
\label{new_algorithm}
For the rest of this paper we will only consider inference over a single opponent mixed
strategy in fictitious play. Separate estimates will be formed
identically and independently for each opponent. We therefore
consider only one opponent, and we drop all dependence on player $i$, and
write $s_{t}$, $\sigma_{t}$ and $Q_{t}$ for Player $i$'s opponent's action, strategy and propensity respectively. Moreover for any vector $x$, $x[j]$ will denote the $j_{th}$ element of the vector and for any matrix $y$, $y[i,j]$ will denote the $(i,j)_{th}$ element of the matrix.

We can use the following state space model to describe the fictitious play process:
\begin{align}
\label{eq:fpekf}
Q_{t}&=Q_{t-1}+\xi_{t-1} \nonumber \\ 
s_{t}&= h(Q_{t})+\zeta_{t} \nonumber
\end{align}
where $\xi_{t-1} \sim N(0,\Xi)$, is the noise of the state process and $\zeta_{t}$ is is the error of the observation state with zero mean and covariance matrix $Z$, which occurs because we approximate a discrete process like best responses, equation (\ref{eq:br}), using a continuous function $h(\cdot)$. Hence we can combine the EKF with fictitious play as follows.
At time $t-1$ Player $i$ has an estimation of his opponent's propensity using a Gaussian distribution with mean $m_{t-1}$ and variance $P_{t-1}$, and has observed an action $s_{t-1}$. Then at time $t$ he uses EKF prediction step to estimate his opponent's propensity. The mean and variance of $p(Q_{t}|s_{1:t-1})$ of the opponent's propensity approximation are:
\begin{align}
m^{-}_{t}=m_{t-1} \nonumber \\
P_{t}^{-}=P_{t-1}+\Xi \nonumber
\end{align}
Player $i$ then evaluates his opponents strategies using his estimations as:
\begin{equation}
 \sigma_{t}(s_{t})=\frac{exp(m_{t}^{-}[s_{t}] / \tau)}{\sum_{\tilde{s} \in S}exp(m_{t}^{-}[\tilde{s}]/\tau)}.
\label{eq:strategies}
\end{equation}
where $m_{t}^{-}[s_{t}]$ is the mean of Player $i$'s estimation about the propensity of his opponent to play action $s_{t}$. Player $i$ then uses the estimation of his opponent strategy , equation  (\ref{eq:strategies}), and best responses, equation (\ref{eq:br}), to choose an action. After observing the opponent's action $s_{t}$, Player $i$ correct his estimations about his opponent's propensity using the update equations of EKF process. The update equations are:

\begin{eqnarray}
v_t&=&z_{t}-h(m_{t}^{-}) \nonumber \\
S_{t}&=&H(m_{t}^{-})P_{t}^{-}H^{T}(m_{t}^{-})+Z \nonumber \\
K_{t}&=&P_{t}^{-}H^{T}(m_{t}^{-})S_{t}^{-1} \nonumber \\
m_{t}&=&m_{t}^{-}+K_{t}v_{t} \nonumber \\
P_{t}&=&P_{t}^{-}-K_{t}S_{t}K_{t}^{T} \nonumber
\label{eq:fpupdate}
\end{eqnarray}
where $h=\frac{exp(Q_{t}[s^{'}]/\tau)}{\sum_{\tilde{s} \in S} exp(Q_{t}[\tilde{s}]/\tau)}$, and $\tau$ is a temperature parameter. The Jacobian matrix $H(m_{t}^{-})$ is defined as \\ $[H(m_{t}^{-})]_{j,j'}=\left\{\begin{array}{cl} \frac{\sum_{j \neq j'}\exp(m_{t}^{-}[j])\exp(m_{t}^{-}[j'])}{(\sum_{j}\exp(m_{t}^{-}[j]))^2}&\mbox{if
$j=j'$}\\- \frac{\exp(m_{t}^{-}[j])\exp(m_{t}^{-}[j'])}{(\sum_{j}\exp(m_{t}^{-}[j]))^2} &\mbox{if $j \neq$ j'}\end{array}\right.$. 

Table \ref{skata} summarises the fictitious play algorithm when EKF is used to predict opponents strategies. 
\begin{table}[ht]
\begin{center}
\begin{tabular}{p{12cm}}
\hline
\hline
At time $t$
\begin{enumerate}
\item Player $i$ maintains some estimations about his opponents propensity up to time $t-1$, $p(Q_{t-1}|s{1:t-1})$. Thus he has an estimation of the mean $m_{t-1}$ and the covariance $P_{t-1}$ of this distribution.

\item Then Player $i$ is updating his estimations about his opponents propensities $p(Q_{t}|s{1:t-1})$ using equations, $m_{t}^{-}=m_{t-1}$,  $P_{t}^{-}=P_{t-1}+W_{t-1}$.

\item Based on the weights of step 1 each player updates his beliefs about his
opponents strategies using $\sigma_{t}^{j}(s^{j})=\frac{exp(m_{t}^{-}(j)/\tau)}{\sum_{j'}exp(m_{t}^{-}(j)/\tau)}$.

\item Choose an action based on the beliefs of step 3 according to best
response.

\item Observe opponent's action $s_{t}$.

\item Update the propensities estimates using $	m_{t}=m_{t}^{-}+K_{t}v_{t}$ and \mbox{$P_{t}=P_{t}^{-}-K_{t}S_{t}K_{t}^{T}$}.

\item set t=t+1
\end{enumerate}
\\
\hline
\hline
\end{tabular}
\caption{EKF Fictitious Play algorithm}
\label{skata}
\end{center}
\end{table}

\section{Theoretical Results}
\label{theory}
In this section we present the convergence results we obtained for games with at least one pure Nash equilibrium and players who have 2 available actions, $s=(1,2)$. We will denote as $-s$ the action that a player does not choose, for example if Player $i$'s opponent chooses action 1, $s=1$ and hence $-s=2$. Also we will denote as $m[1]$ and $m[2]$ the estimated means of opponent's propensity  of action 1 and 2 respectively. Similarly $P[1,1]$ and $P[2,2]$ will represent the variance of the propensity's estimation of action 1 and 2 respectively, and $P[1,2],P[2,1]$ their covariance. 

The proposed algorithm has the following two properties: 
\begin{prop}
\label{prop1}
If at iteration $t$ of the EKF fictitious play algorithm, action $s$ is played from Player $i$'s opponent, then the estimation of his opponent propensity to play action $s$ increases, $m_{t-1}[s]<m_{t}[s]$. Also the estimation of his opponent propensity to play action $-s$ decreases, $m_{t-1}[-s]>m_{t}[-s]$ 
\end{prop}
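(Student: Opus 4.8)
The plan is to compute the one-step correction $K_t v_t = m_t - m_t^{-}$ explicitly and to exploit the very special structure of the softmax Jacobian in the two-action case. Since the prediction step gives $m_t^{-}=m_{t-1}$, the claim is precisely a statement about the two coordinates of $K_t v_t$. Throughout I may assume, without loss of generality, that the observed action is $s=1$; the case $s=2$ follows by interchanging the two coordinates. Writing $p_1=\sigma_t(1)$, $p_2=\sigma_t(2)$ for the softmax probabilities formed from $m_t^{-}$ and setting $e=(1,-1)^{T}$, I would first record that for two actions the Jacobian collapses to the symmetric rank-one matrix $H(m_t^{-})=c\,ee^{T}$ with $c=p_1 p_2>0$, and that the innovation is $v_t=z_t-h(m_t^{-})=p_2\,e$, since $z_t=(1,0)^{T}$ and $1-p_1=p_2$.

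The key simplification is that $e$ together with $f=(1,1)^{T}$ diagonalises everything in sight: $ee^{T}e=2e$ and $ee^{T}f=0$, so $e$ and $f$ are eigenvectors of $S_t=c^{2}(ee^{T})P_t^{-}(ee^{T})+r_t I$ with eigenvalues $2c^{2}\alpha+r_t$ and $r_t$, where $\alpha=e^{T}P_t^{-}e=P[1,1]-2P[1,2]+P[2,2]$. Because $v_t$ is a multiple of $e$, the factor $S_t^{-1}$ acts as a positive scalar, and a short computation collapses the gain term to
\begin{equation*}
K_t v_t=\frac{2c\,p_2}{2c^{2}\alpha+r_t}\,P_t^{-}e=\frac{2c\,p_2}{2c^{2}\alpha+r_t}\begin{pmatrix}P[1,1]-P[1,2]\\ P[1,2]-P[2,2]\end{pmatrix},
\end{equation*}
with $P=P_t^{-}$. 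The scalar prefactor is strictly positive, so the entire statement reduces to the two sign conditions $u:=P[1,1]-P[1,2]>0$ and $w:=P[2,2]-P[1,2]>0$ for the predicted covariance.

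The main obstacle, and the heart of the proof, is establishing $u>0$ and $w>0$ as an invariant of the filter rather than deriving them from positive semidefiniteness alone, which only yields $u,w\ge 0$ in the symmetric case. I would argue by induction on $t$. At initialisation $P_0=I$ gives $u=w=1>0$. The prediction step $P_t^{-}=P_{t-1}+q_t I$ leaves the off-diagonal fixed and raises both diagonal entries by $q_t\ge 0$, hence adds $q_t$ to each of $u$ and $w$ and cannot destroy positivity. For the update step I would compute $K_t S_t K_t^{T}=\beta\,PMP$ with $M=ee^{T}$ and $\beta=2c^{2}/(2c^{2}\alpha+r_t)>0$; carrying out the rank-one product $PMP$ in terms of $u$ and $w$ and subtracting, one finds the remarkably clean transformation
\begin{equation*}
u\mapsto u\bigl(1-\beta\alpha\bigr),\qquad w\mapsto w\bigl(1-\beta\alpha\bigr),\qquad 1-\beta\alpha=\frac{r_t}{2c^{2}\alpha+r_t}\in(0,1),
\end{equation*}
using $\alpha=u+w$. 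Since the common factor is strictly positive, the update rescales $u$ and $w$ without changing their signs, so both remain strictly positive for all $t$. This closes the induction and, together with the displayed formula for $K_t v_t$, proves $m_t[1]>m_{t-1}[1]$ and $m_t[2]<m_{t-1}[2]$; the case $s=2$ is identical after swapping the coordinate roles. I expect the only delicate points to be the bookkeeping in the $PMP$ computation and confirming strict rather than merely weak positivity, for which the multiplicative form above is exactly what is needed.
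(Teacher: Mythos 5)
Your proof is correct, and its computational core (exploiting the rank-one structure $H = c\,ee^{T}$ of the two-action softmax Jacobian to collapse the gain $K_t v_t$ to a positive scalar times $P_t^{-}e$) follows the same route as the paper's Appendix A, which likewise writes the gain, up to a positive multiplicative constant, in terms of $P_t^{-}[1,1]-P_t^{-}[1,2]$ and $P_t^{-}[2,2]-P_t^{-}[1,2]$. Where you genuinely depart from the paper is the induction establishing that these two quantities stay strictly positive. The paper simply reads off the signs of the mean correction from its final display and declares the proposition proved; it never verifies that the relevant covariance combinations are positive, and positive semidefiniteness of $P$ alone does not give this (e.g.\ $P[1,1]=1$, $P[1,2]=2$, $P[2,2]=4$ is PSD with $P[1,1]-P[1,2]<0$, which would flip the sign of the correction). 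Your invariant argument --- initialisation $P_0=I$, the prediction step adding $q\ge 0$ to both $u$ and $w$, and the update step rescaling both by the common factor $r_t/(2c^{2}\alpha+r_t)\in(0,1)$ --- is exactly the missing justification, and it is clean because the covariance update does not depend on which action was observed, so the induction runs uniformly over the history of play. A secondary improvement: your explicit formula gives the two components of $K_tv_t$ magnitudes proportional to $u$ and $w$ separately, whereas the paper's display asserts equal magnitudes for both components (its $b-k$ bookkeeping appears garbled); your version is the correct one, and the sign conclusion survives because only the signs, not the magnitudes, matter. In short: same skeleton, but you supply a key lemma the paper tacitly assumes, making yours the complete argument.
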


\begin{proof}
The proof of Proposition \ref{prop1} is on Appendix \ref{append1}.
\end{proof}

Proposition \ref{prop1} implies that players, when they use EKF fictitious play, learn their opponent's strategy and eventually they will choose the action that will maximise their reward base on their estimation. Nevertheless there are cases where players may change their action simultaneously and trapped in a cycle instead of converging in a pure Nash equilibrium. As an example we consider the game that is depicted in Table \ref{tab:simcoord}.      

\begin{table}
\centering
\begin{tabular}{|c| c| c|}
\hline
 &L&R\\ \hline
 U& 1,1 & 0,0 \\ \hline
 D& 0,0 & 1,1 \\ \hline
 \end{tabular}
\caption{Simple coordination game}
 \label{tab:simcoord}
\end{table}
This is a simple coordination game with two pure Nash equilibria the joint actions $(U,L)$ and $(D,R)$. In the case were the two players start from joint action $(U,R)$ or $(D,L)$ and they always change their action simultaneously then they will never reach one of the two pure Nash equilibria of the game.

\begin{prop}
\label{prop2}
In a $2 \times 2$ game where the players use EKF fictitious play process to choose their actions, and the variance of the observation state is set to $Z=rI+\epsilon I$, with high probability the two players will not change their action simultaneously infinitely often. We define $\epsilon$ as a random number from normal distribution with zero mean and arbitrarily small covariance matrix, $I$ is the identity matrix. 
\end{prop}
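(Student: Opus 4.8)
The plan is to reduce the simultaneous-switching question to a one-dimensional threshold-crossing problem and then exploit the randomness injected through $Z=rI+\epsilon I$ to show that the two players' crossings decouple. First I would observe that in a $2\times 2$ game each player's best response is governed by a single scalar: since the estimated opponent strategy in (\ref{eq:strategies}) is a strictly monotone (sigmoid) function of the difference $d_t^{i}:=m_t^{-}[1]-m_t^{-}[2]$ of player $i$'s two propensity estimates, the best response of player $i$, defined by (\ref{eq:br}), flips exactly when $d_t^{i}$ crosses a constant threshold $c_i$ fixed by player $i$'s payoff matrix. Hence the event ``both players switch action at iteration $t$'' is precisely the event that $d_t^{1}$ and $d_t^{2}$ both cross their respective thresholds between iterations $t-1$ and $t$.

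Next I would make the role of the perturbation precise. With $Z=rI+\epsilon I$ and $\epsilon$ an independent, absolutely continuous (small-variance Gaussian) quantity drawn separately for each player at each iteration -- and in any case differing between the two players, which already breaks the exact symmetry responsible for the cycle of Table \ref{tab:simcoord} -- the innovation covariance $S_t$, the gain $K_t$, and therefore the corrected mean $m_t=m_t^{-}+K_tv_t$ all acquire an independent component with a density. Consequently each increment of $d_t^{i}$ has a density, and the sequences $(d_t^{1})$ and $(d_t^{2})$ are driven by independent noise. The key sub-claim I would establish is a uniform lower bound: whenever both players are poised to switch (both differences lie within one update step of their thresholds), the conditional probability that exactly one of $d_t^{1},d_t^{2}$ crosses its threshold, so that the two players desynchronise, is bounded below by some $p>0$ independent of $t$.

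Given this sub-claim the conclusion follows by a Borel--Cantelli / geometric argument. By Proposition \ref{prop1} a single, non-simultaneous switch moves the play to a joint action in which the played action is thereafter reinforced, so a lone switch breaks the cycle and the estimates lock onto a pure Nash equilibrium. Thus each time the players are poised for a simultaneous switch there is probability at least $p$ that the cycle terminates, so the number of simultaneous switches is stochastically dominated by a geometric random variable and is almost surely finite; since the perturbation covariance may be taken arbitrarily small, the residual probability that simultaneous switching nonetheless persists can be made correspondingly small, which is exactly the ``with high probability'' assertion.

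I expect the main obstacle to be the uniform desynchronisation bound $p>0$. This requires tracking how the small perturbation $\epsilon$ propagates through the nonlinear EKF update to the scalar $d_t^{i}$: one must verify that near the switching surface the map $\epsilon\mapsto d_t^{i}$ is non-degenerate, i.e. that the Jacobian $H(m_t^{-})$ given above and the gain $K_t$ do not vanish there, so that the induced density spreads genuinely across the threshold $c_i$, and that this non-degeneracy holds uniformly over the compact region of propensity-space visited during a cycle. A secondary point to check is that the regularisation $\epsilon I$ is small enough not to destabilise the equilibria that Proposition \ref{prop1} stabilises, so that breaking the cycle indeed yields convergence rather than merely substituting a new cycle.
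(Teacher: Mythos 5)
Your proposal takes essentially the same route as the paper's proof: reduce each player's switch decision to a threshold inequality on the difference of propensity-mean estimates (the paper's condition $m_{t-1}[1] > \ln\bigl(\tfrac{\lambda}{1-\lambda}\bigr) + m_{t-1}[2]$, with the threshold set by a payoff-dependent confidence level $\lambda$), use the Gaussian perturbation $\epsilon$ in $Z=rI+\epsilon I$ to show that, conditional on one player switching, the other fails to switch simultaneously with positive probability, and then conclude that long runs of simultaneous switches have vanishing probability. The only substantive difference is one of rigour rather than approach: you explicitly isolate the uniform lower bound $p>0$ on the desynchronisation probability needed to make the geometric/Borel--Cantelli step work, whereas the paper proves only the per-step positive probability and then asserts that $P(\xi_{t_1},\xi_{t_2},\ldots,\xi_{t_{T+1}})$ is almost zero for large finite $T$, leaving that same uniformity issue implicit.
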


\begin{proof}
The proof of Proposition \ref{prop2} is on Appendix \ref{append2}.
\end{proof}

We should mention here that the reason we set $Z=rI+\epsilon I$ is in order to break any symmetries that occurred because the initialisation of the EKF fictitious play algorithm. Based on Proposition \ref{prop1} and \ref{prop2} we can infer the following propositions and theorems. 

\begin{prop}
\label{prop3}
(a) In a game where players have two available actions if $s$ is a Nash equilibrium, and $s$
is played at date $t$ in the process of EKF fictitious play, $s$ is
played at all subsequent dates. That is, strict Nash equilibria are absorbing for the
process of EKF fictitious play. (b) Any pure strategy steady
state of EKF fictitious play must be a Nash equilibrium.
\end{prop}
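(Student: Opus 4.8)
The plan is to derive both parts from Proposition~\ref{prop1} together with the monotonicity of the softmax belief map \eqref{eq:strategies} in a two-action game. Fix a player (the argument for the other is identical by symmetry) and write $\sigma_t(s)$ for the probability this player's belief assigns to the opponent's action $s$. Since there are only two actions, $\sigma_t(s)=\bigl(1+\exp((m_t^{-}[-s]-m_t^{-}[s])/\tau)\bigr)^{-1}$, so $\sigma_t(s)$ is a strictly increasing function of $m_t^{-}[s]-m_t^{-}[-s]$. Because the prediction step sets $m_t^{-}=m_{t-1}$, a one-step increase of the estimated propensity gap translates directly into an increase of the belief at the next date.

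For part~(a), suppose the strict equilibrium $s=(s^1,s^2)$ is played at date $t$, so each player's action is a strict best response, $u^1(s^1,s^2)>u^1(a',s^2)$ for the alternative action $a'$, and symmetrically for the other player. Consider player~1. Since $s^1$ was played at date $t$, it is a best response (eq.~\eqref{eq:br}) to the belief $\sigma_t$, so the payoff gap $g(p):=u^1(s^1,\sigma)-u^1(a',\sigma)$, which is affine in $p=\sigma(s^2)$, satisfies $g(\sigma_t(s^2))\ge 0$; strictness of the equilibrium gives $g(1)>0$. Having observed $s^2$, Proposition~\ref{prop1} yields $m_t[s^2]>m_{t-1}[s^2]$ and $m_t[-s^2]<m_{t-1}[-s^2]$, so by the monotonicity above $\sigma_{t+1}(s^2)>\sigma_t(s^2)$. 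Writing $\sigma_{t+1}(s^2)$ as a strict convex combination of $\sigma_t(s^2)$ and $1$ and using affineness of $g$ gives $g(\sigma_{t+1}(s^2))>0$, so $s^1$ remains the unique best response at $t+1$; the same holds for the other player. Hence $s$ is played at $t+1$, and induction yields every subsequent date.

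For part~(b), suppose a pure profile $s=(s^1,s^2)$ is a steady state, i.e.\ it is played at every date from some time $T$ on, and argue by contraposition: if $s$ is not a Nash equilibrium then some player, say player~1, has a profitable deviation $a'$ with $u^1(a',s^2)>u^1(s^1,s^2)$, that is $g(1)<0$ in the notation above. Since player~1 observes $s^2$ at every date $t\ge T$, Proposition~\ref{prop1} makes $\sigma_t(s^2)$ monotone increasing. The crux is to show $\sigma_t(s^2)\to 1$; granting this, continuity of $g$ forces $g(\sigma_t(s^2))<0$ for all large $t$, so $s^1$ is no longer a best response and player~1 must eventually switch away from $s^1$, contradicting that $s$ is played forever. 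Therefore every pure steady state is a Nash equilibrium.

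The main obstacle is precisely this convergence $\sigma_t(s^2)\to 1$, equivalently $m_t[s^2]-m_t[-s^2]\to\infty$, in part~(b): Proposition~\ref{prop1} only supplies that the log-odds increases at each step, not that the increments fail to be summable (a monotone sequence could converge to a limit below $1$). I would control this through the EKF update step: the increment added to $m_t[s^2]-m_t[-s^2]$ is governed by the Kalman gain $K_t$ and the innovation $v_t$, and I would argue that the variance inflation $P_t^{-}=P_{t-1}+\Xi$ in the prediction step keeps the gain bounded away from zero along the constant-observation path, so the per-step increases do not form a convergent series and the log-odds diverges. Establishing this quantitative lower bound on the per-step increase, rather than the qualitative sign delivered by Proposition~\ref{prop1}, is the delicate part; part~(a), by contrast, needs only the one-step strict increase and follows immediately.
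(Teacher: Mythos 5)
Your proof is correct and follows the same skeleton as the paper's: part (a) rests on the one-step monotonicity of Proposition \ref{prop1} plus stability of the best response, and part (b) argues by contraposition that beliefs at a persistent pure profile concentrate on that profile, forcing a deviation if it is not an equilibrium. The differences are in rigor, and they matter. For (a), the paper simply asserts that the best response to $\hat{\sigma}_{t+1}$ is again $\hat{s}$; your argument --- the payoff gap $g$ is affine in the belief, $g(\sigma_t(s^2))\ge 0$ because $s^1$ was a best response, $g(1)>0$ by strictness, and $\sigma_{t+1}(s^2)$ lies strictly between $\sigma_t(s^2)$ and $1$ --- is the precise justification, and it makes explicit where strictness of the equilibrium is used. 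For (b), you have put your finger on a genuine gap in the paper's own proof: the paper claims the assessments ``will become concentrated at that profile, because of Proposition \ref{prop1}'', but that proposition delivers only a strict one-step increase, which is compatible with $\sigma_t(s^2)\uparrow L<1$, in which case a non-equilibrium profile could in principle persist forever. Your proposed repair --- use the EKF update together with the inflation $P_t^-=P_{t-1}+\Xi$ to bound the Kalman gain, hence the per-step increase of the log-odds, away from zero along a constant-observation path --- is the right one and can be completed with the paper's own computation \eqref{eq:mupd}: the log-odds increment there is $4\sigma(1)a(b-k)/\bigl(4a^2(b-k)+r+\epsilon\bigr)$, the quadratic form $b=(1,-1)P_t^-(1,-1)^{\mathsf{T}}$ satisfies $b\ge 2q$ because the posterior covariance is positive semidefinite and the prediction step adds $qI$, and the increment is increasing in the covariance quantity; so if $\sigma_t(s^2)$ stayed in a compact subset of $(0,1)$, the increments would be bounded below by a positive constant, the log-odds would diverge, and we reach a contradiction giving $\sigma_t(s^2)\to 1$. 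In short, your proposal is not only correct; on part (b) it is strictly more complete than the argument printed in the paper.
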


\begin{proof}
Consider the case where players beliefs $\hat{\sigma}_{t}$, are such that their optimal choices correspond
to a strict Nash equilibrium $\hat{s}$. In EKF fictitious play process players' beliefs are formed
identically and independently for each opponent based on equation (\ref{eq:strategies}). By Proposition \ref{prop1} we know that players' estimations about their opponents' propensities and therefore their strategies, that each player maintains for the other players, will increase for the actions that are included in $\hat{s}$ and will be reduced otherwise. Thus the best response to their beliefs $\hat{\sigma}_{t+1}$ will be again $\hat{s}$ and since $\hat{s}$ is a Nash equilibrium they will not deviate from it. Conversely, if a player remains at a pure strategy profile, then eventually the assessments will become concentrated at that profile, because of Proposition \ref{prop1}, hence if the profile is not
a Nash equilibrium, one of the players would eventually want to deviate.
\end{proof}

\begin{prop}
\label{prop4}
Under EKF fictitious play, if the beliefs over each player's choices converge, the strategy profile corresponding
to the product of these distributions is a Nash equilibrium. 
\end{prop}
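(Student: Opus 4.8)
The plan is to adapt the classical fictitious-play convergence argument (in the same spirit as the proof of Proposition \ref{prop3}) to the EKF dynamics, reasoning by contradiction. Suppose every belief converges, and let $\sigma^{*}$ denote the product of the limiting believed strategies, with $\sigma^{*,j}$ the limit of the strategy that $j$'s opponent attributes to player $j$. For contradiction I would assume that $\sigma^{*}$ is not a Nash equilibrium. Then by (\ref{eq:nashutil}) some player $j$ has an action $a$ with $\sigma^{*,j}(a)>0$ that is not a best response to $\sigma^{*,-j}$, so there is a strict payoff gap $\max_{s^{j}\in S^{j}} u^{j}(s^{j},\sigma^{*,-j}) - u^{j}(a,\sigma^{*,-j}) > 0$.

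Next I would use continuity of the expected utility $u^{j}(\cdot,\sigma^{-j})$ in the belief together with the assumed convergence $\sigma^{-j}_{t}\to\sigma^{*,-j}$. Because the payoff gap above is strict, it persists under small perturbations of the belief, so there exist a time $T$ and an $\eta>0$ such that for all $t\ge T$ action $a$ is strictly suboptimal in $j$'s best-response computation (\ref{eq:br}); hence $a\notin BR^{j}(\sigma^{-j}_{t})$ and player $j$ plays $a$ only finitely often. This is the step where upper semicontinuity of the best-response correspondence and the handling of the exponential smoothing in (\ref{eq:strategies}) must be made precise.

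The heart of the argument, and the step I expect to be the main obstacle, is the EKF analogue of the empirical-frequency identity used in ordinary fictitious play: I must show that if player $j$ plays $a$ only finitely often then the believed probability $\sigma^{*,j}(a)$ equals zero. In the two-action setting $S=\{1,2\}$, writing $a$ and $-a$, once $j$ ceases to play $a$ Proposition \ref{prop1} guarantees that at every subsequent iteration the observer's mean $m_{t}^{-}[a]$ strictly decreases while $m_{t}^{-}[-a]$ strictly increases, so the difference $m_{t}^{-}[a]-m_{t}^{-}[-a]$ is strictly decreasing; through (\ref{eq:strategies}) this makes $\sigma^{j}_{t}(a)$ strictly decreasing. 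To close the contradiction I need the cumulative drift of this difference to diverge to $-\infty$ rather than to stall at a finite limit. I would establish this by bounding below the per-iteration decrement of $m_{t}^{-}[a]-m_{t}^{-}[-a]$ in terms of the EKF gain $K_{t}$, the innovation $v_{t}$ and the observation variance $Z=rI+\epsilon I$, showing that these decrements are not summable and therefore force $\sigma^{j}_{t}(a)\to 0$, contradicting $\sigma^{*,j}(a)>0$.

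Having derived a contradiction for the arbitrary off-support candidate $a$, I would conclude that the support of each $\sigma^{*,j}$ is contained in $BR^{j}(\sigma^{*,-j})$ for every player $j$, which is exactly the fixed-point condition (\ref{eq:nashutil}); hence $\sigma^{*}$ is a Nash equilibrium. The delicate quantitative control of the EKF updates in the previous paragraph is the crux: in standard fictitious play the corresponding statement is immediate because beliefs are running averages, whereas here it must be extracted from the nonlinear filter recursion via Proposition \ref{prop1}.
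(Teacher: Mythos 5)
Your proposal is correct and takes essentially the same route as the paper's own proof, which is a brief contradiction argument: if the limit profile were not a Nash equilibrium, some player would eventually want to deviate, and by Proposition \ref{prop1} the beliefs would track that deviation, contradicting their assumed convergence. In fact your write-up is substantially more careful than the paper's two-sentence sketch, since you isolate exactly the step the paper glosses over — that finitely many plays of an action $a$ must force the believed probability of $a$ to zero, which requires the per-iteration EKF decrements of $m_{t}[a]-m_{t}[-a]$ to be non-summable (bounded away from zero along the contradiction hypothesis) rather than merely positive.
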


\begin{proof}
Suppose that the beliefs of the players at time t, $\sigma_{t}$, converges to some
profile $\hat{\sigma}$. If $\hat{\sigma}$ were not a Nash equilibrium, some player would eventually want
to deviate and the beliefs would also deviate since based on Proposition \ref{prop1} players eventually learn their opponents actions. 
\end{proof}

\begin{theor}
\label{theo1}
The EKF fictitious play process converges  to the Nash equilibrium in $2\times2$ games with at least one pure Nash equilibrium, when the covariance matrix of the observation space error, $Z$, is defined as in Proposition \ref{prop2}, $Z=rI+\epsilon I$. 
\end{theor}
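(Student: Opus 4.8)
The plan is to combine Propositions \ref{prop1}--\ref{prop4} and to exploit the fact that in a $2\times 2$ game the joint action space has only four points, which under \emph{unilateral} changes form a 4-cycle graph: two corners are adjacent iff exactly one player switches action. First I would invoke Proposition \ref{prop2}: with the perturbed observation covariance $Z=rI+\epsilon I$, with high probability there is a finite time $T$ after which the two players never change their actions simultaneously. For all $t\ge T$ every transition of the joint action is therefore either a ``stay'' or a move along an edge of this 4-cycle, so it suffices to show that the restricted process converges to a pure Nash equilibrium.

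The core of the argument is to rule out every non-convergent behaviour after $T$ by a case analysis on the set of corners visited infinitely often. If the process ever plays a strict Nash equilibrium, Proposition \ref{prop3}(a) makes it absorbing and we are done, so assume no equilibrium is visited after $T$. A simple cycle through three or more distinct corners is impossible: the only such cycle in the 4-cycle graph is the full 4-cycle, which necessarily passes through every corner and hence through at least one pure equilibrium (guaranteed to exist by hypothesis), and that corner is absorbing --- a contradiction. The two remaining obstructions are a two-state oscillation between adjacent non-equilibrium corners and an indefinite stay at a single non-equilibrium corner. In a two-state oscillation only one player is switching, so the other player's action is constant; by Proposition \ref{prop1} that player's estimated propensity, and hence the oscillating player's belief (\ref{eq:strategies}), concentrates on the constant action, so the oscillating player's best response (\ref{eq:br}) must itself settle, a contradiction. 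An indefinite stay at a non-equilibrium corner is excluded by Proposition \ref{prop3}(b): the beliefs concentrate there by Proposition \ref{prop1}, but since the profile is not a Nash equilibrium some player strictly prefers to deviate and eventually does. Having eliminated all alternatives, the joint action converges, and by Proposition \ref{prop3}(b), equivalently Proposition \ref{prop4} applied to the limiting beliefs, the limit is a pure Nash equilibrium.

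I expect the main obstacle to be the coordination-type games with two pure equilibria, where the symmetric simultaneous-switching cycle $(U,R)\leftrightarrow(D,L)$ --- and, on the knife-edge, convergence to the interior mixed equilibrium --- is the essential failure mode; this is exactly what Proposition \ref{prop2} neutralises, its $\epsilon I$ term breaking the initialisation symmetry so that simultaneous switching cannot recur forever, and the whole theorem leans on this. The secondary but more delicate point is translating the continuous belief dynamics of Proposition \ref{prop1} into the discrete statements ``the best response settles'' that close the two-state oscillation and the stationary non-equilibrium cases; here I would make precise that a constant opponent action drives the estimated propensity monotonically, so that after finitely many steps the softmax strategy (\ref{eq:strategies}) places enough mass on that action to fix the best response (\ref{eq:br}).
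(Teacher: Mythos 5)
Your overall architecture (use Proposition \ref{prop2} to kill simultaneous switches, then analyse the joint-action dynamics on the four corners, with Proposition \ref{prop3}(a) making equilibria absorbing and Propositions \ref{prop1}/\ref{prop3}(b) excluding stationary non-equilibria and two-state oscillations) is sound and close in spirit to the paper's proof, and it handles the two-pure-equilibria case correctly: there the non-equilibrium corners are diagonal to each other, so once simultaneous switches stop, any move from a non-equilibrium corner lands on an equilibrium and is absorbed. However, there is a genuine gap in the case of a \emph{unique} pure Nash equilibrium. Your case analysis covers single-corner stays, two-state oscillations, and \emph{simple} cycles through three or more corners, but the long-run behaviour of the process need not be any of these: the three non-equilibrium corners form a path in the 4-cycle graph, and a non-simple walk such as $(U,R)\to(D,R)\to(D,L)\to(D,R)\to(U,R)\to\cdots$ visits three corners infinitely often, involves no simultaneous switch, never touches the equilibrium, and is not a simple cycle. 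Your Proposition-\ref{prop1} concentration argument does not apply to it either, because in such a walk \emph{both} players play \emph{both} of their actions infinitely often, so neither player's beliefs need concentrate on anything. Nothing in your proof rules this behaviour out.

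The paper closes exactly this case with a game-theoretic fact your proof never uses: in a $2\times 2$ game with exactly one pure Nash equilibrium (and strict best responses), at least one player has a dominant action. That player best-responds to every possible belief with the same action, hence never switches; the dynamics then collapse to two corners, the other player's beliefs concentrate by Proposition \ref{prop1}, and the process settles at the equilibrium. Without this dominance observation (or some substitute quantitative argument about the EKF belief dynamics under non-concentrating action frequencies), your case analysis is incomplete; with it, the problematic three-corner walk is impossible because it would require the dominant player to switch. I recommend you split off the unique-equilibrium case explicitly, as the paper does, and reserve the graph-theoretic argument for the two-equilibria case, where it works as stated.
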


\begin{proof}
We can distinct two possible initial states in the game. In the first players' initial beliefs of the players actions are such that their initial joint action $s_{0}$ is a Nash equilibrium. From Proposition \ref{prop3} and equation (\ref{eq:strategies}) we know that they will play the joint action which is a Nash equilibrium for all the iterations of the game. 

The second case where the initial beliefs of the players are such that their initial joint action $s_{0}$ is not a Nash equilibrium is divided in 2 subcategories. The first include $2 \times 2$ games with only one pure Nash equilibrium. In this case, one of the two players has a dominant action, thus for all the iterations of the game he will choose the dominant action. This action maximises his expected payoff regardless the other player's strategy and thus he will select this action in every iteration of the game. Therefore because of Proposition \ref{prop1} the other player will learn his opponent's strategy and players will choose the joint action which is the pure Nash equilibrium. 

The second category includes $2\times2$ games with 2 pure Nash equilibria, like the simple coordination game that is depicted in Table \ref{tab:simcoord}. In this case players initial joint action $s_{0}=(s^{1},s^{2})$ is not a Nash equilibrium. Then the players will learn their opponent's strategy, Proposition \ref{prop1} and Equation (\ref{eq:strategies}), and they will change their action. We know from Proposition \ref{prop2} that in a finite time with high probability the players will not change their actions simultaneously, and hence they will end up in a joint action that will be one of the two pure Nash equilibria of the game. 
\end{proof}

We can extend the results of Theorem \ref{theo1} in  $n \times 2$ games with a better reply path. A game with a better reply path  can be represented
as a graph were its edges are the join actions of the game $s$ and there is a vertex that connects $s$ with $s'$ iff only one player $i$ can increasing his payoff by changing his action \citep{payton_young}. Potential games have a better reply path.  
\begin{theor}
\label{theo2}
The EKF fictitious play process converges to the Nash equilibrium in $n\times 2$ games with a better reply path when the covariance matrix of the observations space error, $Z$, is $Z=r+\epsilon I$. 
\end{theor}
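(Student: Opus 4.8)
The plan is to reduce the $n\times2$ case to the combinatorial structure of the better reply graph and to reuse the belief-learning and symmetry-breaking arguments already established for two players. First I would form the directed graph $G$ whose vertices are the joint actions $s\in S$ and whose edges point from $s$ to $s'$ exactly when a single player $i$ can strictly increase his payoff by switching his (binary) action while the others are held fixed. By hypothesis the game admits a better reply path, so from every vertex there is a directed path to a sink; a sink is a joint action admitting no profitable unilateral deviation, hence is precisely a pure Nash equilibrium, and potential games (the motivating case) belong to this class.

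Next I would invoke Proposition \ref{prop1} to argue that beliefs become accurate along the play. Whenever an opponent repeatedly plays a fixed action, the corresponding propensity estimate $m_t[s]$ rises monotonically while $m_t[-s]$ falls, so through Equation (\ref{eq:strategies}) each player's best response eventually agrees with the best response to the true actions of his opponents. Once beliefs have concentrated, the sequence of joint actions produced by EKF fictitious play is therefore governed by best response dynamics on $G$: at a joint action where exactly one player is unsatisfied, that player moves along an outgoing edge of $G$.

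The delicate point is ruling out cycles created by \emph{simultaneous} deviations, precisely the phenomenon isolated for two players in Proposition \ref{prop2} and illustrated by the coordination game of Table \ref{tab:simcoord}. I would generalise Proposition \ref{prop2} to $n$ players: with the observation noise covariance perturbed to $Z=rI+\epsilon I$, the symmetry among players who would otherwise switch in lock step is broken with high probability, so after finitely many steps no group of players keeps changing action simultaneously in a way that returns play to an already visited vertex. Granting this, the realised trajectory is, with high probability, an honest directed walk in $G$ in which single players deviate one at a time. For the potential-game case this walk strictly increases the potential at each step, and since $S$ is finite the potential takes only finitely many values, so the walk cannot cycle and must terminate at a sink; by Proposition \ref{prop3}(a) that sink is a strict Nash equilibrium and is absorbing, and by Proposition \ref{prop4} the beliefs converge to the corresponding profile.

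I expect the main obstacle to be exactly the $n$-player version of the symmetry-breaking step, together with the guarantee that single-player better replies cannot themselves cycle. For two binary players there is essentially one offending cycle---the players alternating around the two off-equilibrium cells---and the additive noise $\epsilon I$ suffices to desynchronise them. With $n$ players the number of lock-step configurations grows combinatorially, and one must verify that the perturbation disrupts \emph{all} of them with high probability; moreover, outside the potential-game setting there is no ready Lyapunov function, so one must lean directly on the better reply path hypothesis to ensure that the one-at-a-time dynamics make net progress toward a sink rather than looping. These are the points that do not follow immediately from the two-player analysis and that would require the most care.
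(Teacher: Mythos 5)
Your proposal takes essentially the same route as the paper's own proof: split on whether the initial joint action $s_0$ is already a Nash equilibrium (absorbed by Proposition \ref{prop3}), and otherwise combine Proposition \ref{prop1} (beliefs concentrate on observed play), Proposition \ref{prop2} (simultaneous switches are eventually ruled out), and finiteness of the better-reply graph to conclude that one-at-a-time better replies reach a pure equilibrium, which is then absorbing. Where you differ is in rigour, and the difference is in your favour. First, you explicitly flag that Proposition \ref{prop2} is proved only for two players with two actions, so a genuine $n$-player desynchronisation lemma is required; the paper silently applies the $2\times2$ statement to $n$ players without proving any such generalisation. Second, you identify that cycles of single-player better replies must be ruled out: the paper disposes of this with the phrase ``after the search of the vertices of a finite graph, \ldots\ players will choose a joint action which is a Nash equilibrium,'' which as stated is a non sequitur (finite directed graphs can contain cycles), whereas you supply an actual Lyapunov argument---strict increase of the potential---for the potential-game case that motivates the theorem, and you correctly isolate the general weakly-acyclic case as needing the better-reply-path hypothesis together with randomness in which improving move is taken. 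So the two obstacles you list in your closing paragraph are genuine, but they are gaps in the published proof as much as in your sketch; your proposal is, if anything, more complete than the paper's own argument.
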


\begin{proof}
Similarly to the $2 \times 2$ games if the initial beliefs of the players are such that their initial joint action $s_{0}$ is a Nash equilibrium, from Proposition \ref{prop3} and equation (\ref{eq:strategies}), we know that they will play the joint action which is a Nash equilibrium for the rest of the game. 
 
Moreover in the case of the initial beliefs of the players are such that their initial joint action $s_{0}$ is not a Nash equilibrium based on Proposition \ref{prop1} and Proposition \ref{prop2} after a finite number of iterations because the game has a better reply path the only player that can improve his payoff by changing his actions will choose a new action which will result in  a new joint action $s$. If this action is not the a Nash equilibrium then again after finite number of iterations the player who can improve his payoff will change action and a new joint action $s'$ will be played. Thus after the search of the vertices of a finite graph, and thus after a finite number of iterations, players will choose a joint action which is a Nash equilibrium.
\end{proof}

\section{Simulations to define algorithm parameters $\Xi$ and $Z$.}
\label{parameters}
The covariance matrix of the state space error $\Xi=qI$ and the measurement error $Z=rI$ are two parameters that we should define in the beginning of the EKF fictitious play algorithm and they affect its performance.  Our aim is to find values, or range of values, of $q$ and $r$ that can efficiently track opponents' strategy when it smoothly or abruptly change, instead of choosing $q$ and $r$ heuristically for each opponent when we use the EKF algorithm. Nevertheless it is possible that for some games the results of the EKF algorithm will be improved for other combinations of $q$ and $r$ than the ones that we propose in this section. 

We examine the impact of EKF fictitious play algorithm parameters in its performance in the following two tracking scenarios.  In the first one a single opponent chooses his actions using a mixed strategy which changes smoothly and has a sinusoidal form over the iterations of the tracking scenario. In particular for $t=1, 2, \ldots, 100$ iterations of the game: $\sigma_{t}(1)=\frac{cos\frac{{2\pi t}}{n}+1}{2}=1-\sigma_t(2)$, where $n=100$. In the second toy example Player $i$'s opponent change his strategy abruptly and chooses action 1 with probability $\sigma_{t}^{2}(1)=1$ during the first 25 and the last 25 iterations of the game and for the rest iterations of the game $\sigma_{t}^{2}(1)=0$. The probability of the second action is calculated as: $\sigma_{t}^{2}(2)=1 - \sigma_{t}^{2}(1)$. 

We tested the performance of the proposed algorithm for the following range of parameters $10^{-4} \leq q \leq 1$ and  $10^{-4} \leq r \leq 1$. We repeated both 
examples 100 times for each of the combinations of $q$ and $r$. Each time we measured the absolute error of the estimated strategy against the real one. The combined average absolute error when both examples are considered is depicted on Figure \ref{fig:mseqr}. The darkest areas of the contour plot represent the areas where the average absolute error is minimised.

\begin{figure}
\centering
 \includegraphics[scale=0.5]{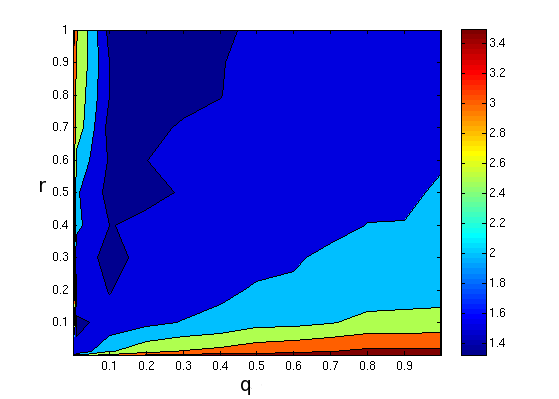}
\caption{Combined absolute error for both tracking scenarios. The range of both parameters, $q$ and $r$ is between $10^{-4}$ and $1$.}
\label{fig:mseqr}
\end{figure}

The average absolute error is minimised for a range of values of $q$ and $r$, that form two distinct areas. In the first area, the wide dark area of Figure \ref{fig:mseqr}, the range of $q$ and $r$ were $0.08 \leq q \leq 0.4$ and $0.2 \leq r \leq 1$ respectively. In the second area, the narrow dark area of Figure \ref{fig:mseqr}, the range of $q$ and $r$ were $0.001 \leq q \leq 0.025$ and $0.08 \leq r \leq 0.13$ respectively. The minimum error which we observed in  our simulations was in the narrow area and in particular when $\Xi=0.01I$ and $Z=0.1I$, where $I$ is the identical matrix. 

\section{Simulation results}
\label{simulation}
This section is divided in two parts. The first part contains results of our simulations in two strategic form games and the second part contains the results we obtained in an ad-hoc sensor network surveillance problem. In all the simulations of this section we set the covariance matrix of the hidden and the observations state to $\Xi=0.01I$ and $Z=(0.1+\epsilon)I$ respectively, where $\epsilon \sim N(0,10^{-5})$ and $I$ is the identical matrix. 

\subsection{Simulations results in strategic form games}
In this section we compare the results of our algorithm with those of fictitious play in two
coordination games. These games are depicted in Tables \ref{tab:simcoord} and \ref{tab:3pclh}. The game that is depicted in Table \ref{tab:simcoord}, as it was described in Section \ref{theory} , is a simple coordination game with two pure Nash equilibria, its diagonal elements. Table \ref{tab:3pclh} presents an extreme version of the climbing hill game \citep{clh} in which three players must climb up a utility function in order to reach the Nash equilibrium where their reward is maximised.

\begin{table*}
\centering
\begin{tabular}{c|c|c|c}
\begin{tabular}{c} ~\\ \footnotesize U\\ \footnotesize M\\ \footnotesize D\\ ~ \end{tabular}
&
\begin{tabular}{ccc}
\small U &\small M &\small D \\    \hline
 \small   0&\small 0&\small 0\\
  \small  0&\small 50&\small 40\\
 \small   0&\small 0&\small 30\\      \hline
 \small   &U&
 \end{tabular}
&
 \begin{tabular}{ccc}
\small U &\small M &\small D \\ \hline
\small -300&\small 70&\small 80\\
\small -300&\small 60&\small 0\\
\small 0&\small 0&\small 0\\     \hline \small &M&
\end{tabular}
 &
 \begin{tabular}{ccc}
 \small U &\small M & \small D \\    \hline
\small \bf{100} & \small -300&\small 90\\
\small 0& \small 0& \small 0\\
\small 0&\small 0&\small 0\\           \hline
 \small &D&
 \end{tabular}

\end{tabular}
 \caption{ Climbing hill game with three players.  Player 1 selects rows, Player 2 selects columns, and Player 3 selects the matrix. The global reward depicted in the matrices,
is received by all players. The unique Nash equilibrium is in bold}
 \label{tab:3pclh}
\end{table*}

We present the results of 50 replications of a learning episode of 50 iterations for each game. As it is depicted in Figures \ref{fig:res1} and \ref{fig:res2} the proposed algorithm performs better than fictitious play in both cases. In the simple coordination game that is shown in Table \ref{tab:simcoord}, the EKF fictitious play algorithm converges to one of the pure equilibria after a few iterations. On the other hand fictitious play is trapped in a limit cycle in all the replications where the initial joint action was not one of the two pure Nash equilibria. For that reason the players' payoff for all the iterations of the game was either 1 utility unit or 0 utility units depending to the initial joint action. In the climbing hill game, Table \ref{tab:3pclh} the proposed algorithm converges to the Nash equilibrium after $35$ iterations when fictitious play algorithm do not converge even after 50 iterations.

\begin{figure}
\centering
 \includegraphics[scale=0.5]{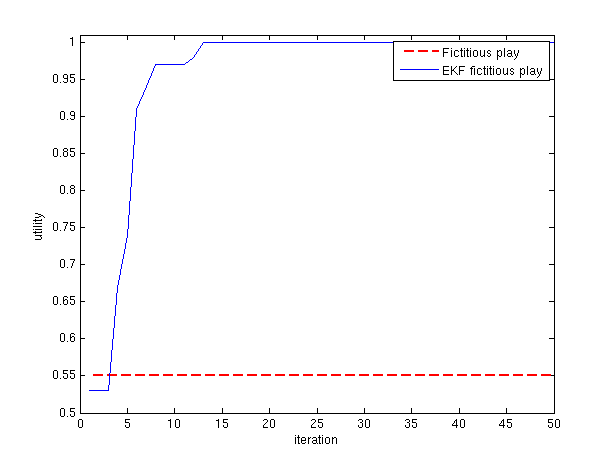}
\caption{Results of EKF and classic fictitious play in the simple coordination game of Table \ref{tab:simcoord}}
\label{fig:res1}
\end{figure}

\begin{figure}
\centering
 \includegraphics[scale=0.5]{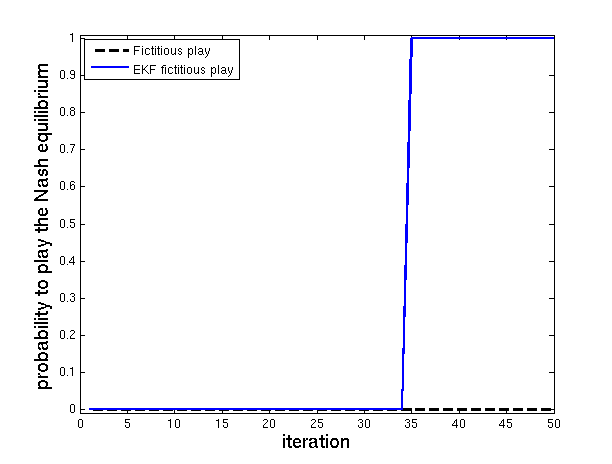}
\caption{Probability of playing the (U,U,D) equilibrium for the EKF fictitious play (solid line) and fictitious play (dash line) for the three player climbing hill game}
\label{fig:res2}
\end{figure}

\subsection{Ad-hoc sensor network surveillance problem.}
We compared the results of our algorithm against those of fictitious play in a coordination task of a power constrained sensor network, where sensors can be either in a sense or sleep mode \citep{faraneli,fict_ql}. When the sensors are in sense mode they can observe the events that occur in their range. During their sleep mode the sensors harvest the energy they need in order to be able function when they are in the sense mode. The sensors then should coordinate and choose their sense/sleep schedule in order to maximise the coverage of the events. This optimisation task can be cast as a potential game. In particular we consider the case where $\mathbb{I}$ sensors are deployed in an area where $E$ events occur. If an event $e$, $e \in E$, is observed from the sensors then it produce some utility $V_e$. Each of the sensors $i=1,\ldots,\mathbb{I}$ should choose an action $s^{i}=j$, from one of the $j=1,\ldots,J$ time intervals which they can be in sense mode. Each sensor $i$ when it is in sense mode can observe 
an event $e$, if it is in its sense range, with probability $p_{ie}=\frac{1}{d_{ie}}$, where $d_{ie}$ is the distance between the sensor $i$ and the event $e$. We assume that the probability each sensor has to observe an event is independent from the other sensors. If we denote as $i_{in}$ the sensors that are in sense mode when the event $e$ occurs and $e$ is in their sensing range, then we can write the probability an event $e$ to be observed from the sensors, $i_{in}$ as 
\begin{equation*}
1- \prod_{i \in i_{in}}{(1-p_{ie})}
\end{equation*}
The expected utility that is
produced from the event $e$ is the product of its utility $V_e$ and
the probability it has to be observed by the sensors, $i_{in}$ that are in sense mode when the event $e$ occurs and $e$ is in their sensing range. More formally we can express the utility that is
produced from an event $e$ as:
\begin{equation*}
U_{e}(s)=V_{e}(1- \prod_{i \in i_{in}}{(1-p_{ie}}))
\label{eq:target_utility}
\end{equation*}
The global utility is then the sum of the utilities that all events, $e \in E$, produce
\begin{equation*}
U_{global}(s)= \sum_{e}{U_{e}(s)}. \label{eq:global_utility}
\end{equation*}

Each sensor after each iteration of the game receives some utility which is based on the sensors and the events that are inside his communication and sense range respectively. For a sensor $i$ we denote $\tilde{e}$ the events that are in its sensing range and $\tilde{s}^{-i}$ the joint action of the sensors that are inside his communication range. The utility that sensor $i$ will receive if his sense mode is $j$ will be 
\begin{equation*}
U_{i}(s^{i}=j,\tilde{s}^{-i})= \sum_{\tilde{e}}{U_{\tilde{e}}(s^{i}=j,\tilde{s}^{-i})} \label{eq:individual_utility}.
\end{equation*}

We compared the performance of the two algorithms in 2 instances of the above scenario one with 20 and one with 50 sensors that are deployed in a unit square. In both instances sensors had to choose one time interval of the day that they will be in sense mode and use the rest time intervals to harvest energy. We consider cases where sensors had to choose their sense mode between 2, 3 and 4 available time intervals. Sensors are able to communicate with other sensors that are at most 0.6 distance units away, and can only observe events that are at most 0.3 distance units away. Moreover in both instances we assumed that 20 events took place in the unite square area. Those events were uniformly distributed in space and time, so an event could evenly appear in any point of the unit square area and it could occur at any time with the same probability. The duration of each event was uniformly chosen between (0-6] hours and each event had a value $V_{e} \in (0-1]$. 
Figures \ref{fig:res3} and \ref{fig:res4} depict the average results of 50 replications of the game for the two algorithms. For each instance, both algorithms run for 50 iterations. To be able to average across the 50 replications we normalise the utility of a replication by the global utility that the sensors will gain if they were only in sense mode during the whole day.

\begin{figure}[!ht]
\centering

\subfigure[Results when sensors have to choose between two time intervals.]{
   \includegraphics[scale =0.3] {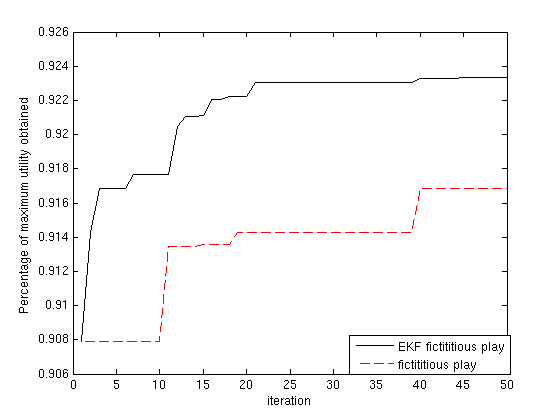}

 }
 \subfigure[Results when sensors have to choose between three time intervals.]{
   \includegraphics[scale=0.3] {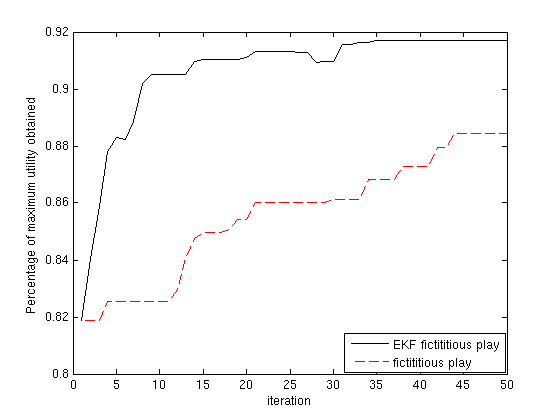}
   
 }

 \subfigure[Results when sensors have to choose between four time intervals.]{
   \includegraphics[scale =0.3] {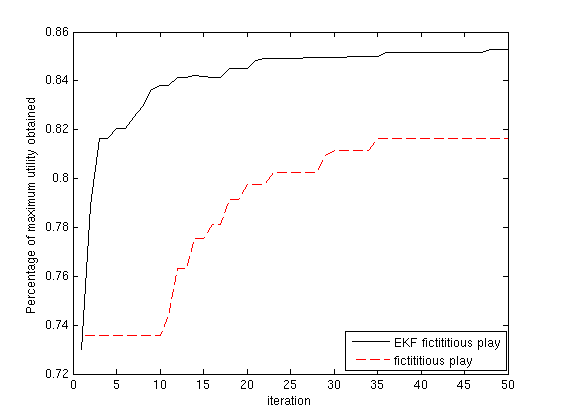}

 }
\caption{Results of the instance where 20 sensors should coordinate for both algorithms. The results of EKF fictitious play are the solid lines and the results of the classic fictitious play are the dash lines. The horizontal axis of the figures depict the iteration of the game and the vertical axis the global utility as a percentage of the global utility of the system in the case that sensors were always in sense mode.}
\label{fig:res3}
\end{figure}

\begin{figure}[!ht]
\centering

\subfigure[Results when sensors have to choose between two time intervals.]{
   \includegraphics[scale =0.3] {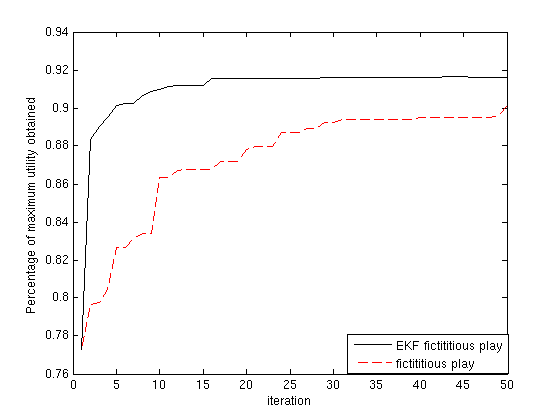}
  
 }
 \subfigure[Results when sensors have to choose between three time intervals.]{
   \includegraphics[scale=0.3] {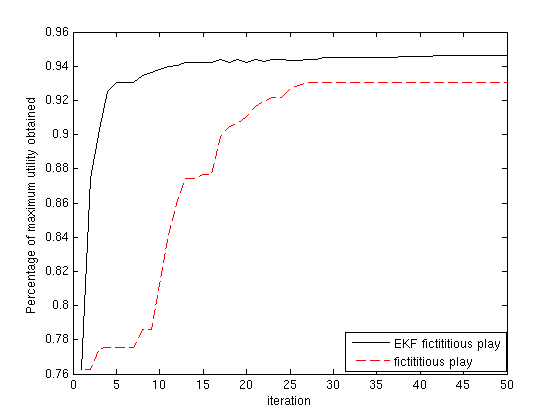}
   
 }

 \subfigure[Results when sensors have to choose between four time intervals.]{
   \includegraphics[scale =0.3] {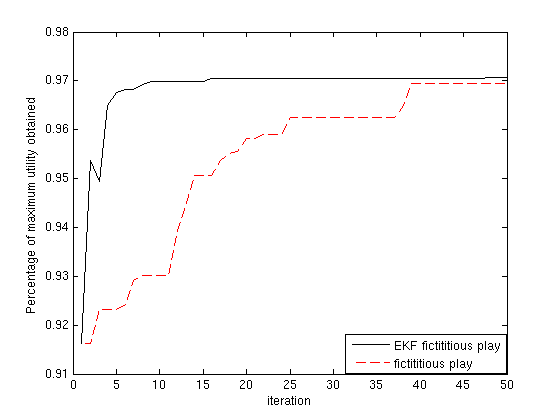}
   
 }
\caption{Results of the instance where 50 sensors should coordinate for both algorithms. The results of EKF fictitious play are the solid lines and the results of the classic fictitious play are the dash lines. The horizontal axis of the figures depict the iteration of the game and the vertical axis the global utility as a percentage of the global utility of the system in the case that sensors were always in sense mode.}
\label{fig:res4}
\end{figure}

As we observe in Figures \ref{fig:res3} and \ref{fig:res4} EKF fictitious play converges to a stable joint action faster than the fictitious play algorithm. In particular on average the EKF fictitious play algorithm needed 10 ``negotiation'' steps between the sensors in order to reach a stable joint action, when fictitious ply needed more than 25. Moreover the classic fictitious play algorithm was always resulted in joint actions with smaller reward than the proposed algorithm. 

\section{Conclusion}
We have introduced a variation of fictitious play that uses Extended Kalman filters to predict opponents' strategies. This variation of fictitious play addresses the implicit assumption of the classic algorithm that opponents use the same strategy in every iteration of the game. 

We showed that, for $2 \times 2$ games with at least one pure Nash equilibrium, EKF fictitious play converges in the pure Nash equilibrium of the game. More over the proposed algorithm converges in games with a better reply path, like potential games, and $n$ players that have 2 available actions. 

EKF fictitious play performed better than the classic algorithm algorithm in the strategic form games and the ad-hoc sensor network surveillance problem we simulated. 
Our empirical observations indicate that EKF fictitious play converges to a solution that is better than the classic algorithm and needs only a few iterations to reach that solution. Hence by slightly increasing the computational intensity of fictitious play less communication is required between agents to quickly coordinate on a desired solution.

\section{Acknowledgements}
This work is supported by The Engineering and Physical Sciences Research Council EPSRC (grant number EP/I005765/1).

\appendix
\section{Proof of Proposition 1}
\label{append1}
We will base the proof of Proposition \ref{prop1} on the properties of EKF when they used to estimate opponent's strategy with two available actions. If player $i$'s opponent has two available actions $1$ and $2$, then we can assume that at time $t-1$ Player $i$ maintains beliefs about his opponent's propensity, with mean $m_{t-1}$ and variance $P_{t-1}$. Moreover based on these estimations he chooses his strategy $\sigma_{t-1}$. At the prediction step of this process he uses the following equations to predict his opponent's propensity and choose an action using best response.

\begin{equation*}
m_{t}^{-} = \left(
\begin{array}{c}
m^{-}_{t-1}[1] \\
m^{-}_{t-1}[2] \\
\end{array} \right)
\end{equation*}

\begin{equation*}
P_{t}^{-} = \left(
\begin{array}{cc}
P^{-}_{t-1}[1,1] &P^{-}_{t-1}[1,2] \\
P^{-}_{t-1}[2,1]& P^{-}_{t-1}[2,2]  \\
\end{array} \right)+ qI
\end{equation*}
without loss of generality we can assume that his opponent in iteration $t$ chooses action 2. Then the update step will be :
\begin{equation*}
 v_t=z_{t}-h(m_{t}^{-})
\end{equation*}
since Players $i$'s opponent played action 2 and $h=\frac{exp(Q_{t}[s^{'}]/\tau)}{\sum_{\tilde{s} \in S} exp(Q_{t}[\tilde{s}]/\tau)}$ we can write $v_{t}$ and $H_{t}(m_{t}^{-})$ as:
\begin{align}
v_{t} = &\left(
\begin{array}{c}
0 \\
1 \\
\end{array} \right)-
\left(
\begin{array}{c}
\sigma_{t-1}(1) \\
1- \sigma_{t-1}(1) \\
\end{array} \right) \nonumber \\
 = &\left(
\begin{array}{c}
- \sigma_{t-1}(1) \\
\sigma_{t-1}(1) \\
\end{array}  \nonumber \right)
\end{align}

\begin{equation*}
H_{t}(m_{t}^{-}) = \left(
\begin{array}{cc}
a_{t} & -a_{t} \\
-a_{t}& a_{t}  \\
\end{array} \right)
\end{equation*}
where $a_{t}$ is defined $a_{t}=\sigma_{t-1}(1)\sigma_{t-1}(2)$. The estimation of \mbox{$S_{t}=H(m_{t}^{-})P_{t}^{-}H^{T}(m_{t}^{-})+Z$} will be: 
\begin{equation*}
S_{t} = a^{2}\left(
\begin{array}{cc}
b & -b \\
-b& b 
\end{array} \right)+Z
\end{equation*}
where $b=P^{-}_{t}[1,1]+P^{-}_{t}[2,2]-2P^{-}_{t}[1,2]$. The Kalaman gain, $K_{t}=P_{t}^{-}H^{T}(m_{t}^{-})S_{t}^{-1}$ can be written as 

\begin{equation*}
K_{t}= \frac{1}{2r b+r^2}\left(
\begin{array}{cc}
P_{t}^{-}[1,1] & k \\
k& P_{t}^{-}[2,2]  \\

\end{array} \right) \left(
\begin{array}{cc}
a_{t}& -a_{t} \\
-a_{t}&a_{t}  \\

\end{array} \right)\left(
\begin{array}{cc}
b+r & b\\
b& b+r \\
\end{array} \right)
\end{equation*}
up to a multiplicative constant we can write
\begin{equation*}
K_{1} \sim \left(
\begin{array}{cc}
c & -c\\
-d& d \\

\end{array} \right)
\end{equation*}
where $c=P_{t}^{-}[1,1]-P_{t}^{-}[1,2]$ and $d=P_{t}^{-}[2,2]-P_{t}^{-}[1,2]$.  The updates then for the mean and variance are:

\begin{align}
	m_{t}=&m_{t}^{-}+K_{t}v_{t}  \nonumber \\
	P_{t}=&P_{t}^{-}-K_{t}S_{t}K_{t}^{T} \nonumber
\end{align}
The mean of the Gaussian distribution that is used to estimate opponent's propensities is:  
\begin{align}
\label{eq:mupd}
m_{t} = &\left(
\begin{array}{c}
m_{t}[1] \\
m_{t}[2] \\
\end{array} \right) 
 =\left(
\begin{array}{c}
m_{t}^{-}[1]-2\sigma(1)\frac{a(b-k)}{4a^{2}(b-k)+(r+\epsilon)}\\
m_{t}^{-}[2]+2\sigma(1)\frac{a(b-k)}{4a^{2}(b-k)+(r+\epsilon)}\\
\end{array} \right)
\end{align}

Based on the above we observe that $m_{t}(1)<m_{t-1}(1)$ and $m_{t}(2)>m_{t-1}(2)$ which completes the proof.

\section{Proof of Proposition 2}
\label{append2}
We consider $2 \times 2$ games with at least one pure Nash equilibrium. In the case that only one Nash equilibrium exists, a dominant strategy exists and thus one of the players will not deviate from this action. Hence we are interested in in $2\times2$ games with two pure Nash equilibria. Without loss of generality we consider a game with similar structure to the simple coordination game that is depicted in Table \ref{tab:simcoord}. with two equilibria, the joint actions in the diagonal of the payoff matrix, $(U,L)$ and $(D,R)$. We will present calculations for Player 1,but the same results hold also for Player 2. We define $\lambda$ as the necessary confidence level that Player 1's estimation of $\sigma_{t}(L)$ should reach in order to choose action $U$. Hence we Player 1 will choose $D$ if: 
\begin{align}
\sigma_{t}(1) &>\lambda  \Leftrightarrow \nonumber \\
\frac{exp(m_{t}^{-}[1])}{exp(m_{t}^{-}[1])+exp(m_{t}^{-}[2])} & > \lambda \Leftrightarrow \nonumber \\
 m_{t}^{-}[1] &> \ln(\frac{\lambda}{1-\lambda}) +m_{t}^{-}[2]  \Leftrightarrow \nonumber \\
 m_{t-1}[1] &> \ln(\frac{\lambda}{1-\lambda}) +m_{t-1}[2]  \nonumber
\end{align}

In order to prove Proposition \ref{prop2}, we need to show that when a player changes his action his opponent will change his action at the same iteration with probability less than 1. In the case where at time $t-1$ the joint action of the players is $U,R$ then Player $1$ believes that his opponent will play $L$, while he observing him playing $R$. Assume that Player 2's beliefs about Player 1's strategies has reached the necessary confident level about Players 1's strategy and at iteration $t$ he will change his action from $R$ to $L$. Player 1 will also change his action at the same time if  

\begin{equation*}
  m_{t-1}[2] > \ln(\frac{1-\lambda}{\lambda}) +m_{t-1}[1]
\end{equation*}
We want to show that players will not change actions simultaneously with probability 1. Hence it is enough to show that 

\begin{equation}
 Prob( m_{t-1}[1] > \ln(\frac{\lambda}{1-\lambda}) +m_{t-1}[2]  )>0
\label{pithan1212}
\end{equation}
We can  replace $m_{t-1}[1]$ and $m_{t-1}[2]$ with their equivalent from (\ref{eq:mupd}) and write:

\begin{eqnarray}
m_{t}^{-}[1]-2\sigma(1)\frac{a(b-k)}{4a^{2}(b-k)+(r+\epsilon)} & > &\ln( \frac{\lambda}{1-\lambda})+ m_{t}^{-}[2]+2\sigma(1)\frac{a(b-k)}{4a^{2}(b-k)+(r+\epsilon)} \Leftrightarrow \nonumber  
\end{eqnarray}

\begin{eqnarray}
-4\sigma(1)\frac{a(b-k)}{4a^{2}(b-k)+(r+\epsilon)} &> & \ln(\frac{\lambda}{1-\lambda})+ m_{t}^{-}[2] - m_{t}^{-}[1] \Leftrightarrow \nonumber \\
\frac{a(b-k)}{4a^{2}(b-k)+(r+\epsilon)} &<&\frac{\ln(\frac{\lambda}{1-\lambda})+ m_{t}^{-}[2] - m_{t}^{-}[1]}{-4\sigma(1)} \nonumber
\end{eqnarray}
Solving this with respect to $\epsilon$ we have 

\begin{equation*}
\epsilon >\frac{a(b-k)\sigma(1)}{\ln(\frac{\lambda}{1-\lambda})+ m_{t}^{-}[2] - m_{t}^{-}[1]}-a^{2}(b-k) -r 
\end{equation*}
Thus we can write (\ref{pithan1212}) as:  
\begin{equation}
 Prob(\epsilon >\frac{a(b-k)\sigma(1)}{\ln(\frac{\lambda}{1-\lambda})+ m_{t}^{-}[2] - m_{t}^{-}[1]}-a^{2}(b-k) -r  )>0
\label{pithan1111}
\end{equation}
Since $\epsilon$ is a Gaussian white noise (\ref{pithan1111}) is always true.  

We also consider the case where at time $t-1$ the joint action of the players is $D,L$ then Player 1 believes that his opponent will play $R$, while he observing him playing $L$. Assume that Player 2's beliefs about Player 1's strategies has reached the necessary confident level and at $t$ he will change his action from $L$ to $R$. Player 1 will also change his action at the same time if  

\begin{equation*}
   m_{t-1}[1] > \ln(\frac{\lambda}{1-\lambda}) +m_{t-1}[2] 
\end{equation*}
We want to show that Players will not change actions simultaneously with probability 1. Hence it is enough to show that 

\begin{equation}
 Prob(m_{t-1}[2] > \ln(\frac{1-\lambda}{\lambda}) +m_{t-1}[1])>0
\label{pithan}
\end{equation}
We can  rewrite (\ref{pithan}) using the results we obtained for  $m_{t-1}[1]$ and $m_{t-1}[2]$ in (\ref{eq:mupd}) again as

\begin{equation}
 Prob(\epsilon >\frac{a(b-k)\sigma(1)}{\ln(\frac{\lambda}{1-\lambda})+ m_{t}^{-}[2] - m_{t}^{-}[1]}-a^{2}(b-k) -r  )>0
\label{pithan1}
\end{equation}
Since $\epsilon$ is a Gaussian white noise (\ref{pithan1}) is always true.  

If we define $\xi_{t}$ the event that both players change their action at time $t$ simultaneously, and assume that the two players have change their actions simultaneously at the following iterations $t_{1}, t_{2}, \ldots, t_{t}$, then the probability that they will also change their action simultaneously at time $t_{T+1}$, $P(\xi_{t_1},\xi_{t_2}, \ldots, \xi_{t_T}, \xi_{t_T+1})$ is almost zero for large but finite $T$.
\bibliographystyle{model4-names}
\bibliography{ekf}
\end{document}